\documentclass[journal]{IEEEtran}
\IEEEoverridecommandlockouts                              

\usepackage{cite}
\usepackage{url}
\usepackage{amsmath}
\usepackage{amssymb}
\usepackage{amsthm}
\usepackage{mathrsfs}
\usepackage{mathtools}
\usepackage{bm}
\usepackage{algorithm}
\usepackage{algorithmic}
\usepackage{graphicx}
\usepackage{subcaption}
\usepackage{xcolor}
\usepackage{booktabs}
\usepackage{tikz}
\usetikzlibrary{shapes,arrows,positioning,calc}

\newtheorem{theorem}{Theorem}
\newtheorem{lemma}{Lemma}

\newtheorem{problem}{Problem}

\DeclareMathOperator{\diag}{diag}

\DeclareMathOperator{\CUT}{CUT}
\DeclareMathOperator{\MAXCUT}{MAXCUT}

\title{APX-Hardness of Computing Lipschitz Constants for Multi-Parametric Quadratic Programs}

\author{Xingchen Li, Kunpeng Liu, Keyou You%
\thanks{This work was supported by National Natural Science Foundation of China (62325305), and the BNRist project (No. BNR2024TD03003). (Corresponding author: Keyou You)}%
\thanks{Xingchen Li, Kunpeng Liu, and Keyou You are with the Department of Automation, and Beijing National Research Center for Info. Sci. \& Tech. (BNRist), Tsinghua University, Beijing 100084, China. (e-mail: \{lixc21,liukp22\}@mails.tsinghua.edu.cn, youky@tsinghua.edu.cn)}%
}
\date{\today}

\begin{document}
\maketitle
\thispagestyle{empty}
\pagestyle{empty}

\begin{abstract}
Computing the Lipschitz constant of the solution map of a multi-parametric quadratic program is important for the analysis of optimization-based control. This problem is governed by three factors: the parameter dimension, the number of decision variables, and the number of constraints. While empirical evidence has long suggested exponential complexity, a rigorous complexity-theoretic proof has been lacking. In this paper, we fill this gap by proving that this problem is not only NP-hard but also APX-hard. Furthermore, we reveal that: (a) the problem becomes polynomial-time solvable when the number of constraints or decision variables is fixed; and (b) both NP-hardness and APX-hardness persist even in the scalar parameter case. These results confirm that the complexity stems from the number of constraints and variables, rather than the parameter dimension. Numerical experiments further validate these theoretical findings.
\end{abstract}

\begin{IEEEkeywords}
Multi-parametric quadratic programs, Lipschitz constant, computational complexity, NP-hardness, APX-hardness
\end{IEEEkeywords}

\section{Introduction}

Model predictive control (MPC) is a widely adopted control strategy in industrial applications, including chemical processes \cite{chen2012distributed}, power systems \cite{venkat2008distributed}, robotics \cite{nubert2020safe}, and autonomous vehicles \cite{cheng2019longitudinal}. At each sampling instant, MPC determines the optimal control input by solving a quadratic program (QP) parameterized by the current system state. This procedure naturally defines a multi-parametric quadratic program (mpQP), where the optimal control input is expressed as a function of the state \cite{Bemporad2002}.

The Lipschitz constant of the mpQP solution map --- quantifying the maximum rate of change of the optimal input with respect to state perturbations --- plays a central role across optimization and control theory. From a parametric optimization perspective, it is a fundamental regularity measure of the solution map, directly connected to stability moduli in sensitivity analysis \cite{Fiacco1983} and to the classical Hoffman error bound for polyhedral systems \cite{pena2019equivalences}. In control, a bounded Lipschitz constant ensures asymptotic stability under estimation errors \cite{scokaert1997discrete} and enables robust invariant set constructions \cite{marruedo2002stability}. More broadly, it determines the disturbance amplification factor in tube-based robust MPC \cite{Mayne2005}, underpins recursive feasibility and stability guarantees of MPC under inexact numerical optimization \cite{Rubagotti2014}, and quantifies the performance degradation of suboptimal real-time MPC \cite{Zeilinger2011}. It also enables reliable constraint removal strategies for reducing mpQP complexity \cite{hou2024harnessing}.

It is well-established that the solution map of an mpQP is continuous and piecewise affine \cite{Bemporad2002}. This structure arises because the set of active constraints remains invariant within polyhedral regions of the parameter space, known as critical regions (CRs). Consequently, the Lipschitz constant is determined by the maximum spectral norm of the affine function's Jacobian across all CRs. A direct approach therefore requires enumerating all CRs, which is central to explicit MPC \cite{Borrelli2017}. However, the number of CRs is empirically observed to grow exponentially with the problem size \cite{Alessio2009}. 

To bypass the full enumeration, existing approaches can be broadly categorized into two classes. \emph{Explicit enumeration} methods \cite{Bemporad2002} construct the complete polyhedral partition; decomposition strategies \cite{Oberdieck2016} and variable separation techniques \cite{Kvasnica2015} can reduce the subproblem size but do not overcome the exponential growth of CRs. \emph{Mixed-integer programming} formulations \cite{fabiani2022reliably} compute exact Lipschitz constants or tight bounds without full enumeration, yet may remain exponential in the worst case and typically rely on the 1-norm or $\infty$-norm, necessitating conservative conversions for the 2-norm required in Lyapunov analysis. 

Despite these advances, a foundational question remains open: the computational complexity of computing the Lipschitz constant for a general mpQP is unclear. Although the problem is widely presumed to be intractable, no rigorous theoretical proof has been established. Furthermore, it remains unclear whether the computational difficulty stems from the parameter dimension, the number of constraints, the number of decision variables, or the interplay between them.

To bridge this gap, we provide the first rigorous complexity-theoretic analysis of computing the Lipschitz constant of mpQPs. Via a gap-preserving reduction from MAX-CUT \cite{Karp1972, haastad2001some}, we demonstrate that this computation is not only NP-hard but also APX-hard, ruling out any polynomial-time approximation scheme unless P$=$NP. Furthermore, we characterize the complexity with respect to problem dimensions, showing that both NP-hardness and APX-hardness persist even for scalar parameters. This result is established through a reduction chain from MAX-2SAT \cite{garey1979computers} to the Lipschitz constant of a scalar-input ReLU network, and subsequently to an mpQP instance. This confirms that the computational intractability is driven by the number of constraints and decision variables, whereas the problem becomes polynomial-time solvable when either of these dimensions is fixed.

The remainder of this paper is organized as follows. Section~II introduces the mpQP and the necessary background. Section~III establishes the NP-hardness and APX-hardness of the problem. Section~IV analyzes the complexity with respect to problem dimensions. Section~V presents numerical experiments. Section~VI concludes the paper. 

In comparison with our conference version \cite{li2026conference}, we further add numerical experiments, refine the scalar-parameter hardness proof with Gray-code encoding, and provide explicit polynomial-time constructions from ReLU networks to mpQPs along with implementation references.

\section{Problem Formulation and Preliminaries}\label{sec:problem}

We first define the mpQP and derive the piecewise affine structure of its solution map via KKT conditions. We then introduce the Lipschitz constant of the solution map, recall the relevant notions from complexity theory, and state the objectives of this paper.

\subsection{Multi-Parametric Quadratic Programs and Lipschitz Continuity}
We consider the strictly convex multi-parametric quadratic program (mpQP) defined as:
\begin{equation} \label{eq:mpqp_def}
    \min_{z \in \mathbb{R}^{n_z}} \ \frac{1}{2} z^\top Q z + x^\top H z + f^\top z \quad \text{s.t.} \quad A z \leq b + S x,
\end{equation}
where $Q \in \mathbb{R}^{n_z \times n_z}$ with $Q \succ 0$, $H \in \mathbb{R}^{n_x \times n_z}$, $f \in \mathbb{R}^{n_z}$, $A \in \mathbb{R}^{n_c \times n_z}$, $b \in \mathbb{R}^{n_c}$, and $S \in \mathbb{R}^{n_c \times n_x}$.
The feasible set $\mathcal{F}(x) = \{z \in \mathbb{R}^{n_z} \mid A z \leq b + S x\}$ is assumed non-empty for all $x$ in the polyhedral parameter domain $\mathcal{X} \subseteq \mathbb{R}^{n_x}$. Strict convexity guarantees that the optimizer is unique for each $x \in \mathcal{X}$, thereby defining the \emph{solution map} $z^*: \mathcal{X} \to \mathbb{R}^{n_z}$.

Since \eqref{eq:mpqp_def} is convex with linear constraints, the KKT conditions are necessary and sufficient for optimality. Stationarity gives $Q z + H^\top x + f + A^\top \lambda = 0$, and complementarity $\lambda_i (A_i z - b_i - S_i x) = 0$ partitions the constraints at each $x$ into an \emph{active set} $\mathcal{A}(x) := \{i : A_i z^*(x) = b_i + S_i x\}$ and its complement, with $\lambda_i = 0$ for all $i \notin \mathcal{A}(x)$. By standard mpQP theory \cite{Borrelli2017}, the parameter space is partitioned into a finite set of polyhedral \emph{critical regions} $\{\mathcal{R}_i\}_{i\in\mathcal I}$, each associated with a fixed active set $\mathcal A_i$, on which the optimal solution is affine:
\begin{equation*}
    z^*(x) = K_i x + c_i \quad \text{on } \mathcal R_i.
\end{equation*}
For each critical region, the affine gain $K_i$ can be generated from a linearly independent subset $\mathcal{A}\subseteq\mathcal{A}_i$ with $\mathrm{rank}(A_{\mathcal{A}})=|\mathcal{A}|\le n_z$ \cite{darup2017maximal}, so every gain $K_i$ corresponds to some full-rank active subset. We study the Lipschitz constant of the solution map $z^*$:
\begin{equation}\label{eq:lipschitz_def}
    L := \max_{i \in \mathcal{I}} \|K_i\|_2.
\end{equation}

A natural approach to computing $L$ is to enumerate the critical regions and evaluate $\|K_i\|_2$ on each. However, this approach is fundamentally limited by the number of regions. Since each gain $K_i$ is generated by a linearly independent active subset of cardinality at most $n_z$ \cite{darup2017maximal}, the number of distinct gains is bounded by $\sum_{j=0}^{\min(n_c,n_z)}\binom{n_c}{j}$, which is exponential in the worst case. Enumeration therefore becomes intractable as the problem grows. This explains the practical difficulty of enumeration-based methods, but leaves open the more fundamental question of whether $L$ can be computed in polynomial time by \emph{any} algorithm. Answering this requires the language of computational complexity, which we now briefly review.

\subsection{Complexity Classes and Computational Hardness}
For readers without a complexity-theory background, we briefly recall the two notions of hardness used in our analysis \cite{Arora2009}: \emph{NP-hardness}, which rules out exact polynomial-time algorithms, and \emph{APX-hardness}, which further rules out arbitrarily accurate polynomial-time approximations.

\paragraph*{NP-Hardness} The class \emph{NP} consists of decision problems for which any proposed solution can be verified in polynomial time. It contains many classical combinatorial problems, such as Boolean satisfiability (SAT), the travelling salesman problem (TSP), and MAX-CUT \cite{Karp1972}, none of which is known to admit a polynomial-time algorithm despite decades of effort. A problem is \emph{NP-hard} if every problem in NP can be reduced to it in polynomial time, so that an efficient algorithm for it would yield one for all of NP. Under the widely believed conjecture $\text{P}\neq\text{NP}$, no NP-hard problem admits a polynomial-time exact algorithm. 

\paragraph*{APX-Hardness} When exact computation is intractable, one may instead seek an approximate solution. For a maximization problem with optimum $L$, an \emph{$\rho$-approximation algorithm} ($\rho\ge 1$) is a polynomial-time algorithm that returns a value $\widehat L$ with $\widehat L \le L \le \rho\,\widehat L$. A problem is called \emph{APX-hard} if there is a fixed threshold $\rho^*>1$ such that $\rho$-approximation is NP-hard for every $\rho<\rho^*$. Hence APX-hardness implies NP-hardness. It also rules out any \emph{polynomial-time approximation scheme} (PTAS), i.e., a family of polynomial-time algorithms attaining ratio arbitrarily close to~$1$.

\subsection{The Objective of This Paper}

The exponential bound on critical regions makes enumeration impractical, but does not by itself preclude a more efficient algorithm. Equipped with the language of complexity, we can now ask the sharper questions that drive this paper:
\begin{enumerate}
    \item Is computing $L$ inherently hard, i.e., NP-hard or even APX-hard?
    \item How do the dimensions $n_z$, $n_c$, and $n_x$ each contribute to this hardness?
\end{enumerate}

This paper resolves both questions: we prove that computing $L$ is APX-hard (and hence NP-hard), and we further characterize how this hardness depends on the problem dimensions. The APX-hardness is established by showing that the following approximation problem is NP-hard.

\begin{problem}[$\rho$-approximation of mpQP-Lipschitz]\label{prob:lip_approx}
Given mpQP data $(Q,H,f,A,b,S)$ with $Q\succ 0$ over a polyhedral domain $\mathcal{X}$, and an approximation ratio $\rho\ge 1$, compute a $\widehat{L}$ such that $\widehat{L}\le L\le \rho\widehat{L}$. 
\end{problem}

The hardness of Problem~\ref{prob:lip_approx} is established in two parts. Section~\ref{sec:np_hardness} proves the general APX-hardness via a reduction from MAX-CUT. Section~\ref{sec:dimensional_analysis_new} then analyzes the role of each problem dimension: it identifies tractable cases when $n_c$ or $n_z$ is fixed (Section~\ref{subsec:tractable_new}), and shows via a reduction from MAX-2SAT that the hardness persists even for $n_x=1$ (Section~\ref{subsec:scalar_new}). The two reduction paths are summarized in Fig.~\ref{fig:reduction_roadmap}.

\begin{figure}[!t]
\centering
\begin{tikzpicture}[node distance=0.8cm,
    block/.style={rectangle, draw, thick, text width=2.8cm, text centered, rounded corners, minimum height=0.75cm, font=\scriptsize},
    arrow/.style={->, thick}]
    \node[block, fill=red!10] (maxcut) {MAX-CUT\\(Known APX-hard)\cite{haastad2001some}};

    \node[block, fill=red!10, right=1.5cm of maxcut] (sat) {MAX-2SAT\\(Known APX-hard)\cite{haastad2001some}};
    \node[block, fill=orange!10, below=of sat] (deep) {1D-NN-Lipschitz\\};
    \draw[arrow] (sat) -- node[right, font=\tiny, align=center] {Lem.~\ref{lem:1d_nn_hard}\\(Sec.~\ref{subsec:scalar_new})} (deep);

    \node[block, fill=green!10, text width=2.8cm, below=1.2cm of $(maxcut |- deep)!0.5!(deep)$] (mpqp) {mpQP-Lipschitz\\ (APX-hard)};
    \draw[arrow] (maxcut) -- node[left, font=\tiny, align=center] {Thm.~\ref{thm:complexity_main}\\(Sec.~\ref{sec:np_hardness})} (mpqp);
    \draw[arrow] (deep) -- node[right, xshift=0.2cm, font=\tiny, align=center] {Lem.~\ref{lem:mpqp_relu_equiv}\\+Thm.~\ref{thm:scalar_hard}\\(Sec.~\ref{subsec:scalar_new})} (mpqp);
\end{tikzpicture}
\caption{Reduction roadmap. Left path: a direct reduction from MAX-CUT (known APX-hard \cite{haastad2001some}) establishes that computing the mpQP Lipschitz constant is NP-hard and APX-hard (Thm.~\ref{thm:complexity_main}). Right path: MAX-2SAT (known APX-hard \cite{haastad2001some}) reduces to the Lipschitz constant of a scalar-input ReLU network (Lem.~\ref{lem:1d_nn_hard}), which is then embedded into an mpQP (Lem.~\ref{lem:mpqp_relu_equiv}) and combined with an output-layer scaling argument (Thm.~\ref{thm:scalar_hard}), showing that hardness persists even for $n_x=1$.}
\vspace{-0.4cm}
\label{fig:reduction_roadmap}
\end{figure}

\section{Computing Lipschitz Constants of mpQPs is NP-hard and APX-hard}\label{sec:np_hardness}

We establish the computational intractability of computing $L$ via a gap-preserving reduction from MAX-CUT, which is APX-hard \cite{Papadimitriou1991}.

\begin{theorem}\label{thm:complexity_main}
Computing the Lipschitz constant $L$ of an mpQP is NP-hard and APX-hard.
\end{theorem}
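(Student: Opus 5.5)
We reduce from MAX-CUT on graphs of maximum degree at most $3$, which remains APX-hard. The plan has three steps: embed a two-layer ReLU network into an mpQP, choose its weights so that the squared Lipschitz constant is an affine function of the maximum cut, and check that the resulting gap is a constant factor.

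\textbf{Step 1 (ReLU embedding).} For hidden weights $W\in\mathbb{R}^{p\times n_x}$ and output weights $u\in\mathbb{R}^p$, take decision variables $(y,w)\in\mathbb{R}^{p}\times\mathbb{R}$ and solve
\[
\min_{y,w}\ \tfrac12\|y-Wx\|^2+\tfrac12 w^2 \quad\text{s.t.}\quad y\ge 0,\ w\le u^\top y,\ -w\le -u^\top y .
\]
This is of the form \eqref{eq:mpqp_def}: the objective is strictly convex in $(y,w)$, and $H$ is built from $W$. The two inequalities pin $w=u^\top y$, so $w$ cannot perturb the $y$-block optimum; I will verify this on the KKT conditions. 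Hence $y^*=\max(Wx,0)$, and $(y^*,w^*)$ is a stack of this ReLU layer and $u^\top y^*$.

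The gain on the critical region with activation pattern $D=\diag(d)$, $d\in\{0,1\}^p$, is $K_D=\begin{bmatrix}DW\\ u^\top DW\end{bmatrix}$. Its norm is dominated by the last row if $u$ is scaled by a large factor $\alpha$: $\|K_D\|_2^2$ lies between $\alpha^2\|W^\top Du\|^2$ and $\alpha^2\|W^\top Du\|^2+\|W\|^2$. Alternatively, $y$ can be given a tiny weight in the output map. Either way, $L$ equals $\alpha\max_D\|W^\top D u\|$ up to an additive error that becomes negligible.

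\textbf{Step 2 (realizability and encoding of the cut).} The main obstacle is that only activation patterns realized by some $x$ count. To make every pattern realizable, give each hidden unit its own parameter coordinate $t_k$: set $x=(\xi,t)$ and $W=[M\ \ I_p]$. Each pre-activation $m_k^\top\xi+t_k$ can then be signed independently, and all $2^p$ patterns occur on open regions. We get
\[
\|W^\top Du\|^2=\|M^\top Du\|^2+\|Du\|^2 .
\]
Now take the graph $G=(V,E)$ with $n$ vertices and maximum degree $\Delta\le 3$, and fix $c\ge 2\Delta+1$. Let $m_1,\dots,m_n$ be vectors with Gram matrix $cI-A_G$, where $A_G$ is the adjacency matrix; this matrix is PSD, so a factorization exists and is computable in polynomial time to the needed precision. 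Use two hidden units per vertex, with rows $m_i$ and $-m_i$, and take $u=\mathbf 1$. A pattern $D$ then yields $M^\top Du=\sum_i \sigma_i m_i$ with $\sigma\in\{-1,0,1\}^n$, where $\sigma_i=0$ when both or neither unit of vertex $i$ is active. Thus
\[
\|M^\top Du\|^2=c\|\sigma\|_2^2-\sigma^\top A_G\sigma .
\]
Zeroing a coordinate loses $c$ but changes $\sigma^\top A_G\sigma$ by at most $2\Delta<c$, so the maximum is attained at some $\sigma\in\{\pm1\}^n$. The nuisance term $\|Du\|^2$ depends only on which units are active. I will remove it either by replacing the $t$-columns with $\epsilon I$ (which keeps every pattern realizable) or by bounding it by $2n$ and absorbing it. For $\sigma\in\{\pm1\}^n$ we have $\sigma^\top A_G\sigma=2|E|-4\,\mathrm{cut}(\sigma)$, which gives
\[
L^2=\alpha^2\bigl(cn-2|E|+4\,\MAXCUT(G)\bigr)+O(\epsilon+\alpha^{-1}\text{-terms}).
\]

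\textbf{Step 3 (gap preservation).} For bounded degree, $\MAXCUT(G)\ge|E|/2$ and $n\le 2|E|$ once isolated vertices are removed. So the additive term $cn-2|E|$ is $O(\MAXCUT(G))$, and $L^2$ is within constant factors an affine function of $\MAXCUT(G)$. Therefore a $(1+\delta)$-approximation of $L$ yields a $(1+O(\delta))$-approximation of $\MAXCUT$ on bounded-degree graphs. Since the latter has a fixed inapproximability threshold, there is $\rho^*>1$ such that Problem~\ref{prob:lip_approx} is NP-hard for all $\rho<\rho^*$. This proves APX-hardness, and NP-hardness follows.

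The delicate points I expect to need care are two: checking that the equality constraints with the $w$-block keep $y^*=\mathrm{ReLU}(Wx)$ exactly, and choosing $\alpha$ and $\epsilon$ of polynomial bit size so that the error terms stay below the cut gap.
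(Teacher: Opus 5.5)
\textbf{The gap is in Step~1, and every later step depends on it.} Pinning $w=u^\top y$ does not decouple $w$ from the $y$-block. On the feasible set the objective equals $\tfrac12\|y-Wx\|^2+\tfrac12(u^\top y)^2$, so the KKT conditions read $y-Wx+u\,(u^\top y)-\lambda=0$, $\lambda\ge 0$, $\lambda^\top y=0$. Hence $y^*\neq\max(Wx,0)$ in general. Already for $p=1$, $u=\alpha$, $Wx=a>0$ you get $y^*=a/(1+\alpha^2)$ and $w^*=\alpha a/(1+\alpha^2)$. The output row then has gain $\alpha/(1+\alpha^2)$, which \emph{vanishes} as $\alpha\to\infty$ instead of dominating. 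So the gains are not $K_D$, the free set is not the ReLU activation set, and the KKT check you planned cannot succeed. The Gram-matrix and gap computations of Steps~2--3 describe a solution map this mpQP does not have.

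The missing idea is to introduce the output map as a change of variables, not as a penalized, pinned variable. For example, minimize $\tfrac12\|y-Wx\|^2+\tfrac12(w-u^\top y)^2$ subject only to $y\ge 0$. For fixed $y$ the optimal $w$ is $u^\top y$, which makes the second term zero. Hence $y^*=\max(Wx,0)$ and $w^*=u^\top y^*$ exactly, and the Hessian is positive definite (the Schur complement of its $w$-entry is $I$). This is the device the paper uses: the metric $(M_1M_1^\top)^{-1}$ makes $z=M_1y$ a linear image of a pure projection, giving $z^*=M_1\max(0,C^\top x)$.

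With that repair your argument goes through, and it is a genuinely different route. The free bias coordinates make every pattern realizable. Your analysis of $c\|\sigma\|^2-\sigma^\top A_G\sigma$ over $\sigma\in\{-1,0,1\}^n$ and the identity $\sigma^\top A_G\sigma=2|E|-4\,\mathrm{cut}(\sigma)$ are correct, and bounded degree keeps the offset $cn-2|E|$ of order $\MAXCUT(G)$.

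Two smaller points. First, merely bounding the nuisance by $\|Du\|^2\le 2n$ is not enough, because an additive uncertainty of order $n$ is comparable to the cut gap. Either use the $\epsilon I$ columns, or note that the maximum is attained with exactly one unit per vertex active, so the nuisance equals $n$ exactly. Second, no precision argument is needed, because an exact rational factorization of $cI-A_G$ exists. Take the columns of the oriented incidence matrix $C$, whose Gram matrix is the Laplacian $C^\top C$, and append private integer coordinates of squared norm $c-\deg(i)$ (four suffice by Lagrange's theorem).

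Compared with the paper: there, $\{0,1\}$ patterns index cuts directly through $\|Cs\|_2^2=\CUT(G,s)$. The rank-one term $\mathbf 1(Cs)^\top$ then gives $L^2\approx n\,\MAXCUT(G)$ with no additive offset. This works on general graphs and turns H\aa{}stad's $16/17$ into the explicit threshold $\sqrt{17/16}$. Your construction needs the $\pm$ pairing and bounded-degree MAX-CUT to control the offset, and it yields only an unspecified $\rho^*>1$. In exchange, its realizability argument is simpler than the paper's connectivity argument.
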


\begin{proof}
We establish the hardness results via a gap-preserving reduction from the MAX-CUT problem. Consider an unweighted connected graph $G=(V, E)$ with $n$ vertices and $m$ edges. For a partition of vertices into two sets $S$ and $\bar{S} = V \setminus S$, define the indicator vector $s \in \{0, 1\}^n$ by $s_i = 1$ if $i \in S$ and $s_i = 0$ otherwise. The cut value $\CUT(G, s)$ counts the number of edges connecting vertices in different sets. Let $C \in \mathbb{R}^{m \times n}$ be the \textit{oriented incidence matrix} of $G$, where each row corresponds to an edge $(i,j) \in E$ with arbitrary orientation, having entry $+1$ at column $i$ and $-1$ at column $j$. The choice of orientation does not affect the analysis since only the magnitude $\|Cs\|_2$ enters in the sequel. The cut value can be expressed algebraically as $\CUT(G, s) = \|C s\|_2^2 = \sum_{(i,j) \in E} (s_i - s_j)^2$, where each term $(s_i - s_j)^2 \in \{0, 1\}$ equals $1$ if and only if $s_i \neq s_j$. The MAX-CUT problem asks for:
\[
\MAXCUT(G) = \max_{s \in \{0, 1\}^n} \|C s\|_2^2.
\]

We construct an mpQP instance with parameter $x \in \mathbb{R}^m$ and decision variable $z \in \mathbb{R}^n$, corresponding to $n_x=m$, $n_z=n$, and $n_c=n$. Let $\delta > 0$ be a rational perturbation parameter. Define matrices $M_1 = \mathbf{1}\mathbf{1}^\top + \delta I_n$ and $M_2 = C^\top$. The constructed mpQP is:
\begin{equation}\label{eq:mpqp_construction}
    \min_{z} \ \frac{1}{2} z^\top (M_1 M_1^\top)^{-1} z \quad \text{s.t.} \quad M_1^{-1} z \ge M_2 x.
\end{equation}
By substitution $u = M_1^{-1} z$, the optimal solution is $z^*(x) = M_1 \max(0, M_2 x)$. The Jacobian in a region with activation pattern $\Sigma = \diag(s)$ for $s \in \{0, 1\}^n$ is $J_s = M_1 \Sigma M_2$, and the squared Lipschitz constant is $L_\delta^2 = \max_{s \in \mathcal{S}} \|J_s\|_2^2$, where $\mathcal{S}$ denotes the set of feasible activation patterns.

Since $G$ is connected, $\ker(C)=\mathrm{span}\{\mathbf 1\}$ and hence $\mathrm{Im}(C^\top)=\mathbf{1}^\perp$. For any non-constant $s\in\{0,1\}^n$, the choice $y=s-(\mathbf 1^\top s/n)\mathbf 1$ lies in $\mathrm{Im}(C^\top)$ and satisfies $\mathrm{sign}(y)=2s-\mathbf 1$, so there exists $x$ with $M_2 x = y$ realizing the pattern $s$. In particular, the optimal MAX-CUT partition is always feasible.

Substituting $M_1$, we decompose the Jacobian into a dominant term and a perturbation:
\[
J_s = (\mathbf{1}\mathbf{1}^\top + \delta I) \Sigma C^\top = \underbrace{\mathbf{1} s^\top C^\top}_{A_s} + \underbrace{\delta \Sigma C^\top}_{E_s}.
\]
The dominant term $A_s = \mathbf{1}(Cs)^\top$ is a rank-1 matrix with spectral norm $\|A_s\|_2 = \|\mathbf{1}\|_2 \|C s\|_2 = \sqrt{n \cdot \CUT(G, s)}$. The perturbation term satisfies $\|E_s\|_2 \le \delta \|C^\top\|_2 \le \delta \sqrt{2m}$, since each row of $C$ has exactly two nonzero entries $\pm 1$.

By the triangle inequality, $\|J_s\|_2$ lies between $\|A_s\|_2 - \|E_s\|_2$ and $\|A_s\|_2 + \|E_s\|_2$. Squaring these bounds and using $\|A_s\|_2 = \sqrt{n \cdot \CUT(G, s)}$ and $\|E_s\|_2 \le \delta \sqrt{2m}$, we obtain
\[
\left| \|J_s\|_2^2 - n \cdot \CUT(G, s) \right| \le 2\delta \sqrt{2mn \cdot \CUT(G, s)} + 2\delta^2 m.
\]
Applying this bound at the maximizer $s^*\in\arg\max_{s\in\mathcal S}\|J_s\|_2^2$ (upper) and at the MAX-CUT optimizer $s_{\mathrm{MC}}$ (lower, feasible by the preceding paragraph), and using $\CUT\le m$ to simplify, we obtain
\begin{equation}\label{eq:abs_gap}
    \bigl| L_\delta^2 - n\cdot\MAXCUT(G) \bigr| \;\le\; 2\delta m\sqrt{2n} + 2\delta^2 m.
\end{equation}
Fix any $\varepsilon\in(0,1/17)$ and choose the rational $\delta=\varepsilon/(8mn)$ (polynomial bit-length). The right-hand side of \eqref{eq:abs_gap} is then at most $\varepsilon\sqrt{2}/(4\sqrt n)+\varepsilon^2/(32mn^2)<\varepsilon$, hence
\begin{equation}\label{eq:gap_clean}
    \bigl| L_\delta^2/n - \MAXCUT(G) \bigr| \;<\; \varepsilon/n.
\end{equation}

Suppose, for contradiction, that for some fixed $\rho<\sqrt{17/16}$ a polynomial-time algorithm returns a rational $\widehat L$ satisfying $\widehat L\le L_\delta\le\rho\widehat L$. Squaring (both sides nonnegative) gives $\widehat L^2\le L_\delta^2\le\rho^2\widehat L^2$, and combining with \eqref{eq:gap_clean} sandwiches $\widehat L^2/n$ as
\begin{equation}\label{eq:M_bounds}
    \frac{\MAXCUT(G)}{\rho^2} - \frac{\varepsilon}{n\rho^2} \;<\; \frac{\widehat L^2}{n} \;<\; \MAXCUT(G) + \frac{\varepsilon}{n}.
\end{equation}
Set $\Delta:=1/\rho^2-16/17>0$. By H\aa{}stad's theorem \cite{haastad2001some}, for any $\eta>0$ it is NP-hard to distinguish $\MAXCUT(G)\ge\alpha$ from $\MAXCUT(G)\le (16/17+\eta)\alpha$, with $\alpha\ge m/2$. Taking $\eta=\Delta/2$, the YES/NO ranges of $\widehat L^2/n$ from \eqref{eq:M_bounds} are disjoint whenever $(\Delta/2)\alpha>\varepsilon(1+\rho^{-2})/n$, which (using $\alpha\ge m/2$, $1+\rho^{-2}\le 2$) holds for all sufficiently large $mn$ under our choice $\varepsilon<1/17$. Hence the algorithm decides H\aa{}stad's gap in polynomial time, contradicting P$\ne$NP. This proves that Problem~\ref{prob:lip_approx} is NP-hard for every fixed $\rho\in(1,\sqrt{17/16})$. Moreover, since the construction \eqref{eq:mpqp_construction}--\eqref{eq:gap_clean} is a gap-preserving reduction from MAX-CUT, which is APX-hard \cite{Papadimitriou1991}, computing $L$ is APX-hard, and a fortiori NP-hard.
\end{proof}

\section{The Role of Problem Dimensions in Complexity}\label{sec:dimensional_analysis_new}

Having established general NP-hardness and APX-hardness in Section~\ref{sec:np_hardness}, we now investigate which problem dimensions drive this computational intractability. Understanding this is crucial for practitioners: if hardness stems from high-dimensional parameters, low-order systems might be tractable; alternatively, if hardness arises from the number of constraints or decision variables, short-horizon problems might be easier. 

Our analysis reveals an interesting contrast. On one hand, fixing the number of constraints $n_c$ or decision variables $n_z$ renders the problem polynomial-time solvable (Section~\ref{subsec:tractable_new}). On the other hand, hardness persists even when $n_x = 1$ (Section~\ref{subsec:scalar_new}). Together, these results show that the fundamental source of complexity is the \emph{combinatorial explosion of critical regions} as both $n_c$ and $n_z$ grow, rather than the parameter space dimension.

\subsection{Tractable Cases: Fixed $n_c$ or $n_z$}\label{subsec:tractable_new}

We begin by characterizing the special cases where the problem admits polynomial-time algorithms.

\begin{lemma}\label{lem:fixed_nc_nz_new}
If $n_c$ or $n_z$ is fixed, then $L$ is computable in polynomial time.
\end{lemma}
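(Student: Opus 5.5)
The plan is to enumerate candidate active sets, using the bound recalled in Section~\ref{sec:problem}: every gain $K_i$ is generated by a linearly independent active subset $\mathcal A\subseteq\{1,\dots,n_c\}$ with $|\mathcal A|\le\min(n_c,n_z)$. The number of such subsets is at most $N:=\sum_{j=0}^{\min(n_c,n_z)}\binom{n_c}{j}$. If $n_c$ is fixed, then $N\le 2^{n_c}$ is a constant. If $n_z$ is fixed, then $N\le (n_c+1)^{n_z}$ is polynomial in the input size. In both cases only polynomially many candidate subsets need to be examined.

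\textbf{Gain of a candidate subset.} For each candidate $\mathcal A$ with $A_{\mathcal A}$ of full row rank, I would write the KKT equations with $\lambda_i=0$ for $i\notin\mathcal A$:
\[
\begin{pmatrix} Q & A_{\mathcal A}^\top\\ A_{\mathcal A} & 0\end{pmatrix}\begin{pmatrix} z\\ \lambda_{\mathcal A}\end{pmatrix}=\begin{pmatrix}-H^\top x-f\\ b_{\mathcal A}+S_{\mathcal A}x\end{pmatrix}.
\]
The KKT matrix is nonsingular because $Q\succ0$ and $A_{\mathcal A}$ has full row rank. Solving this linear system gives $z=K_{\mathcal A}x+c_{\mathcal A}$ and $\lambda_{\mathcal A}=\Lambda_{\mathcal A}x+d_{\mathcal A}$ by Gaussian elimination in polynomial time.

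\textbf{Realizability via an LP.} The candidate gain should count only if it is actually attained. To check this, I would form the critical region
\[
\mathcal R_{\mathcal A}=\{x\in\mathcal X:\ A(K_{\mathcal A}x+c_{\mathcal A})\le b+Sx,\ \Lambda_{\mathcal A}x+d_{\mathcal A}\ge 0\}.
\]
Whether this set is nonempty, or full-dimensional, is decided by one linear program, for instance by maximizing a slack over the interior, solvable in polynomial time by interior-point or ellipsoid methods.

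This test is the step I expect to be the main obstacle. Under degeneracy, a region may be lower-dimensional, and its gain need not equal the local Jacobian. The justification I would use is the correspondence stated in Section~\ref{sec:problem} (\cite{darup2017maximal}): every gain $K_i$ of a full-dimensional critical region arises from some linearly independent subset, and that subset's KKT region then contains $\mathcal R_i$. I would therefore keep the subsets whose region $\mathcal R_{\mathcal A}$ has nonempty interior, checked by the slack-maximizing LP.

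\textbf{Taking the maximum.} Finally, $L$ is the maximum of $\|K_{\mathcal A}\|_2$ over the accepted subsets. Each spectral norm is the square root of the largest eigenvalue of $K_{\mathcal A}^\top K_{\mathcal A}$. Under the standard real-number (or arbitrary-precision) model, this can be computed to any prescribed accuracy in polynomial time. Polynomially many polynomial-time steps then give a polynomial-time algorithm.
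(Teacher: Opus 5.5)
Your proposal is correct and takes the same route as the paper's proof: enumerate the at most $\sum_{j\le\min(n_c,n_z)}\binom{n_c}{j}$ linearly independent active subsets, solve each KKT system for $K_{\mathcal A}$, filter candidates by an LP, and maximize $\|K_{\mathcal A}\|_2$. Your full-dimensionality test is in fact slightly more careful than the paper's ``discard empty regions,'' which under degeneracy can admit a spurious gain from a lower-dimensional region; there are two small points to fix. First, a single subset's KKT region need not contain all of $\mathcal R_i$, because the supporting subset can change across a degenerate region; it suffices that the finitely many closed KKT regions cover $\mathcal R_i$, so one of them meets it in a set with nonempty interior. Second, impose the LP slack only on rows with nonzero $x$-coefficients, since for example the rows $j\in\mathcal A$ of $A(K_{\mathcal A}x+c_{\mathcal A})\le b+Sx$ are identically $0\le 0$.
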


\begin{proof}
By \cite{darup2017maximal}, every gain $K_i$ is generated by a linearly independent active subset $\mathcal{A}\subseteq\{1,\ldots,n_c\}$ with $\mathrm{rank}(A_\mathcal{A})=|\mathcal{A}|\le n_z$. The number of such candidate subsets is bounded by $\sum_{j=0}^{n_z}\binom{n_c}{j}$, which is $O(1)$ when $n_c$ is fixed and $O(n_c^{n_z})$ when $n_z$ is fixed, hence polynomial in either case. For each candidate $\mathcal{A}$, $K_\mathcal{A}$ is evaluated by rational arithmetic in $\mathrm{poly}(n_z,n_x)$ time. Discarding empty regions and taking $L=\max_{\mathcal A}\|K_\mathcal A\|_2$ over the polynomially many survivors yields $L$ in polynomial time.
\end{proof}

\subsection{Hardness Persists for Scalar Parameters ($n_x = 1$)}
\label{subsec:scalar_new}

The tractability results in Section~\ref{subsec:tractable_new} raise a critical question: does fixing the parameter dimension $n_x$ alone guarantee tractability? While the geometric simplicity of a scalar domain ($n_x = 1$) might intuitively suggest low complexity, we demonstrate that this is not the case: both NP-hardness and APX-hardness persist even for $n_x=1$.

The obstacle to a direct reduction is that with only a scalar parameter, the natural combinatorial signal carried by the $2^{n_z}$ binary activation patterns in the MAX-CUT construction of Theorem~\ref{thm:complexity_main} is no longer accessible. We therefore route the hardness through an intermediate object that, despite a scalar input, can still partition the domain into exponentially many affine pieces: a deep ReLU network. Such a network can ``fold'' $[0,1]$ exponentially many times via a sawtooth composition (Fig.~\ref{fig:sawtooth}), producing $2^n$ critical regions from depth-$n$ tent maps. Crucially, every ReLU network is itself realizable as an mpQP solution map (Lemma~\ref{lem:mpqp_relu_equiv}), so any hardness for the network transfers to the mpQP with $n_x=1$.

Our reduction therefore proceeds in two steps (Fig.~\ref{fig:reduction_roadmap}): first from MAX-2SAT to the Lipschitz computation of a scalar-input ReLU network (the \emph{1D-NN-Lipschitz problem}), then from 1D-NN-Lipschitz to mpQP-Lipschitz with $n_x=1$. We choose MAX-2SAT over MAX-CUT as the source problem because its $O(n+m)$ bits of Boolean structure fit naturally into the sawtooth folding: the $2^n$ atomic sub-intervals of an $n$-fold tent composition are in bijection with the $2^n$ Boolean assignments. Formally, the 1D-NN-Lipschitz problem asks for the Lipschitz constant of a scalar-input, scalar-output ReLU network $F: [0,1] \to \mathbb{R}$, defined recursively by $h_0 = x$ and $h_\ell = \mathrm{ReLU}(W_\ell h_{\ell-1} + b_\ell)$ for $\ell=1,\dots,D$, with final linear output $F(x) = W_{D+1} h_D + b_{D+1}$ and rational parameters.

\begin{lemma}\label{lem:1d_nn_hard}
The 1D-NN-Lipschitz problem is APX-hard: unless P$=$NP, no polynomial-time algorithm achieves an approximation ratio below $22/21$.
\end{lemma}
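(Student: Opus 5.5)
We reduce from MAX-2SAT, where H\aa{}stad's theorem \cite{haastad2001some} makes it NP-hard to distinguish instances with at least $\alpha$ satisfiable clauses from instances with at most $(21/22+\eta)\alpha$. The reduction mirrors the proof of Theorem~\ref{thm:complexity_main}. Given a 2-CNF formula with variables $v_1,\dots,v_n$ and clauses $c_1,\dots,c_m$, we build in polynomial time a scalar ReLU network $F:[0,1]\to\mathbb{R}$ whose Lipschitz constant equals $\kappa\cdot\mathrm{MAX2SAT}$, up to an additive error much smaller than $\kappa$, for an explicit scale $\kappa$. Any approximation of $L(F)$ with ratio $\rho<22/21$ would then separate the two H\aa{}stad cases, exactly as in \eqref{eq:M_bounds}.

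\textbf{Step 1 (folding and encoding).} Let $T(t)=2\,\mathrm{ReLU}(t)-4\,\mathrm{ReLU}(t-1/2)$ be the tent map, and set $u_k=T^{(k)}(x)$ for $k=0,\dots,n$; each $u_k$ costs one layer of width two. The level-$n$ partition of $[0,1]$ into $2^n$ atomic intervals $I_a$ is in bijection with assignments $a\in\{0,1\}^n$ through the Gray code: bit $a_k$ is read off as $[u_{k-1}\ge 1/2]$. Because of the Gray-code property, adjacent atomic intervals differ in exactly one bit. We also define a bump $y=T(u_n)$, which vanishes at every atomic boundary and has slope $\pm 2^{n+1}$ on each half of each $I_a$. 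All intermediate values are carried forward by identity neurons, using $\mathrm{ReLU}(t)=t$ for $t\ge0$.

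\textbf{Step 2 (gating by literals and clauses).} For a literal $\ell$ read from level $k$, we form a gated copy of the bump,
\[
g_\ell=\mathrm{ReLU}\bigl(y-C_k\,\mathrm{ReLU}(\pm(1/2-u_{k-1}))\bigr),
\]
with the sign chosen by the polarity of $\ell$. The intended behaviour is $g_\ell=y$ on intervals where $\ell$ is true and $g_\ell=0$ where it is false. A clause $c=\ell_1\vee\ell_2$ is realized by $g_c=\max(g_{\ell_1},g_{\ell_2})$, using $\max(p,q)=p+\mathrm{ReLU}(q-p)$. The output is $F=\sum_c g_c$, so on the interior of $I_a$ away from the transition zones we get $|F'|=2^{n+1}\cdot\#\{\text{clauses satisfied by }a\}$. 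This gives $L(F)\ge 2^{n+1}\,\mathrm{MAX2SAT}$, witnessed on the interval of an optimal assignment. All weights are powers of two or small constants, so bit-lengths stay polynomial.

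\textbf{Step 3 (upper bound; the main obstacle).} The hard part is the matching upper bound $L(F)\le 2^{n+1}\mathrm{MAX2SAT}+o(2^{n+1})$. The danger is spurious large slopes in the zones where a gate $\mathrm{ReLU}(\pm(1/2-u_{k-1}))$ changes, i.e.\ near points where $u_{k-1}=1/2$. These are atomic boundaries at level $k$, and there $y=0$.

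The plan is to choose $C_k$ so that $C_k\,\mathrm{ReLU}(\cdot)$ exceeds $y$ everywhere in the "false" region except inside a window of width $O(2^{-n})$ around each boundary. Inside that window, both $y$ and the gated value are at most $2^{n+1}\cdot O(2^{-n})$. We would first attempt $C_k\approx 2^{\,n+1-k}$, which matches the slopes $2^{k}$ of $u_{k-1}$ against $2^{n+1}$ of $y$.

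We then bound the slope of each $g_c$ in these windows by $O(2^{n+1})$. Only one variable flips across each boundary (Gray code), and the window has vanishing width. Using these facts, we argue that on every linear piece the slope of $F$ is at most $2^{n+1}$ times the number of clauses satisfied by one of the two adjacent assignments. If this per-piece bound proves too lossy, the fallback is to shrink the windows: add an extra fold, i.e.\ use $T^{(n+r)}$ bumps with a constant $r$. This makes the leakage relative to $2^{n+1}$ an $O(2^{-r})$ fraction, which preserves any constant gap strictly below $22/21$.

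\textbf{Step 4 (conclusion).} With $|L(F)/2^{n+1}-\mathrm{MAX2SAT}|$ below a small constant fraction of $\alpha\ge m\cdot 3/4$, we argue as after \eqref{eq:M_bounds}. A polynomial-time algorithm with ratio $\rho<22/21$ would decide H\aa{}stad's MAX-2SAT gap, which is impossible unless P$=$NP.
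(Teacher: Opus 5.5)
Your Step~3 carries the whole reduction, and as planned it fails. The suggested gate strength $C_k\approx 2^{n+1-k}$ comes from a slope miscount. The function $u_{k-1}=T^{(k-1)}(x)$ has slope $\pm 2^{k-1}$, not $\pm 2^{k}$, so $C_k\,\mathrm{ReLU}(\pm(1/2-u_{k-1}))$ grows at rate $2^{n}$ away from a crossing, half the rate of $y$. The gated bump then leaks on the first two thirds of the false atomic interval next to each crossing, with slope larger than $2^{n+1}$. For $k=1$ and $d=1/2-x$ you get $g_{v_1}=\mathrm{ReLU}(y-2^n d)$. This equals $2-3\cdot 2^{n}d$ for $d\in[2^{-n-1},\tfrac23 2^{-n}]$, i.e.\ slope $\tfrac32\cdot 2^{n+1}$.

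Take the single clause $v_1\vee\neg v_2$ with $n\ge 3$. The ramp in $g_{\neg v_2}$ already exceeds $1\ge y$ on that piece, so $g_{\neg v_2}$ vanishes there, and hence $\mathrm{Lip}(F)\ge\tfrac32\cdot 2^{n+1}=\tfrac32\cdot 2^{n+1}\,\mathrm{MAX\text{-}2SAT}$. This contradicts both your per-piece claim and your target bound $2^{n+1}\,\mathrm{MAX\text{-}2SAT}+o(2^{n+1})$. An overshoot by a factor as large as $3/2>22/21$ destroys the gap. The fallback with $r$ extra folds does not help. The Lipschitz constant is a supremum of slopes, and narrowing the window where $y-C_k\,\mathrm{ReLU}(\cdot)>0$ does not lower the slope inside it, which stays of order $|y'|+C_k\,|\mathrm{ReLU}(\cdot)'|$.

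The missing idea is to make the ramp at least as steep as the bump: take $C_k=2^{n+2-k}$. Each false component of variable $k$ is bounded by points of $Z_k=\{u_{k-1}=1/2\}\subseteq 2^{-n}\mathbb{Z}$, or by an endpoint of $[0,1]$. On such a component the gate equals exactly $2^{k-1}\,\mathrm{dist}(x,Z_k)$. Meanwhile $y\le 2^{n+1}\,\mathrm{dist}(x,2^{-n}\mathbb{Z})\le 2^{n+1}\,\mathrm{dist}(x,Z_k)$. Hence $g_\ell\equiv 0$ wherever $\ell$ is false, and $g_\ell=y$ wherever it is true.

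Let $q(a)$ be the number of clauses satisfied by $a$. Then $F|_{I_a}=q(a)\,y$ on every atomic interval, so $\mathrm{Lip}(F)=2^{n+1}\,\mathrm{MAX\text{-}2SAT}$ exactly. This is the same exact identity the paper obtains with its aligned tents $T_k(\mathrm{ReLU}(\pm(2T_{n-k}(x)-1)))$, which equal $\tilde T_j$ or $0$ on each $I_j$. With it, any $\rho$-approximation of $\mathrm{Lip}(F)$ is directly a $\rho$-approximation of MAX-2SAT. The additive-error bookkeeping in your Steps~3--4 then becomes unnecessary.
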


\begin{proof}
We reduce MAX-2SAT to computing the Lipschitz constant of a 1D ReLU network. Let $\phi$ be a MAX-2SAT formula with $n$ variables and $m$ clauses. 

\textit{Step 1 (Sawtooth folding).} Define the tent function $g(x) = 2\min(x, 1-x)$, which equals $2x$ for $x \in [0, 1/2]$ and $2-2x$ for $x \in [1/2, 1]$. This function can be implemented as $g(x) = \mathrm{ReLU}(2x) - 2\,\mathrm{ReLU}(2 x - 1)$, requiring depth 1 and width 2. The $n$-fold composition $T_n = g \circ g \circ \cdots \circ g$ ($n$ times) creates a sawtooth function with $2^n$ linear pieces on $[0,1]$, where each piece has slope $\pm 2^n$ (see Fig.~\ref{fig:sawtooth}). By composing $n$ tent functions, $T_n$ can be expressed as a ReLU network with depth $O(n)$ and width $O(1)$.

\textit{Step 2 (Bit extraction).} From the sawtooth function, we construct $n$ binary signals $b_1(x), \ldots, b_n(x)$ that encode the index $j$ of the interval $x \in [j \cdot 2^{-n}, (j+1) \cdot 2^{-n})$ in $n$-bit \emph{reflected Gray code} (see Fig.~\ref{fig:bit_extraction}). Each signal is realized by the composition
\[
    b_k(x) = T_k\!\left( \mathrm{ReLU}(2\,T_{n-k}(x) - 1) \right), \quad k = 1, \ldots, n,
\]
where the inner sawtooth $y:=T_{n-k}(x)$ partitions $[0,1]$ into $2^{n-k}$ pieces, the ramp $\mathrm{ReLU}(2y-1)$ extracts a triangular spike from each peak of $T_{n-k}$ (producing $2^{n-k-1}$ spikes whose input traverses $0{\to}1{\to}0$), and the outer $T_k$ folds every spike into $2^{k}$ sub-peaks. The resulting $b_k$ has $2^{n-1}$ unit-height peaks aligned with the intervals whose Gray-code index has bit $k$ set, and inherits slope $\pm 2^{n+1}$ on every linear piece (the extra factor $2$ relative to $T_n$ comes from the ramp $\mathrm{ReLU}(2y-1)$). Each $b_k$ has depth $O(n)$ and width $O(1)$, giving a total depth of $O(n)$ and width $O(n)$ for extracting all $n$ bits. The Gray-code ordering is an artifact of the composition order and is immaterial since Step~3 enumerates assignments by brute force.

\emph{Spike alignment.} Index the atomic sub-intervals by $I_j = [j\cdot 2^{-n},(j+1)\cdot 2^{-n})$, $j=0,\ldots,2^n-1$, and let $G(j)\in\{0,1\}^n$ denote the Gray-code representation of $j$. Define the common ``atomic tent'' $\tilde T_j(x) = 2^{n+1}\min(x - j\cdot 2^{-n},\; (j+1)\cdot 2^{-n} - x)$ on $I_j$, which is the unique tent of width $2^{-n}$, height $1$, and slope $\pm 2^{n+1}$ supported on $I_j$. From the composition $b_k = T_k\circ\mathrm{ReLU}\circ(2T_{n-k}-1)$, induction on $n-k$ yields
\begin{equation}\label{eq:bk_alignment}
    b_k(x)\big|_{I_j} = \begin{cases} \tilde T_j(x), & \text{if } G(j)_k = 1, \\ 0, & \text{if } G(j)_k = 0, \end{cases}
\end{equation}
as illustrated for $n=3$ in Fig.~\ref{fig:bit_extraction}. The same identity, with $G(j)_k$ flipped, holds for $\neg b_k$. Consequently, every Boolean combination $C(x)$ of literals formed by AND/OR/NOT satisfies $C(x)|_{I_j}\in\{\tilde T_j(x),0\}$, since on each $I_j$ all inputs reduce to either $\tilde T_j$ or $0$ and $\max/\min$ preserves this dichotomy.

\textit{Step 3 (Boolean operations).} Since each bit signal $b_k(x)$ shares the same piecewise-linear scaffold (slope $\pm 2^{n+1}$, $2^{n-1}$ peaks), Boolean operations preserve this structure:
\begin{itemize}
    \item NOT: a negated literal $\neg x_k$ is realized by flipping the inner ramp to $\mathrm{ReLU}(1 - 2\,T_{n-k}(x))$, which yields peaks on the complementary set of intervals while keeping the same slope magnitude $2^{n+1}$.
    \item OR: $b_i \lor b_j = \max(b_i, b_j) = (b_i + b_j + |b_i - b_j|)/2$, where $|z| = \max(z, 0) + \max(-z, 0)$.
    \item AND: $b_i \land b_j = \min(b_i, b_j) = (b_i + b_j - |b_i - b_j|)/2$.
\end{itemize}
Each gate requires depth $O(1)$ and width $O(1)$. For each clause $C_j = (l_{j1} \lor l_{j2})$, we compute its truth value using the OR operation. The total number of satisfied clauses is then computed by summing all clause outputs: $F_\phi(x) = \sum_{j=1}^{m} C_j(x)$, which requires depth $O(1)$ and width $O(m)$.

Finally, the constructed network $F_\phi$ requires depth $O(n)$ and width $O(n + m)$, with all weights and biases in $\{0, \pm 1, \pm 2\}$ (hence $O(1)$ bits each). Fix any atomic interval $I_j$ corresponding to the assignment $s = G(j)\in\{0,1\}^n$. By the alignment property \eqref{eq:bk_alignment} extended to literals and Boolean combinations, each clause output satisfies $C_l(x)|_{I_j} = q_l(s)\cdot \tilde T_j(x)$, where $q_l(s)\in\{0,1\}$ indicates whether $s$ satisfies clause $C_l$. Summing over $l=1,\ldots,m$ gives $F_\phi(x)|_{I_j} = q(s)\cdot \tilde T_j(x)$, where $q(s) := \sum_{l=1}^m q_l(s)$ is the number of clauses satisfied by~$s$.
Since $\tilde T_j$ has slope $\pm 2^{n+1}$ on $I_j$, the Lipschitz constant of $F_\phi$ on $I_j$ equals $q(s)\cdot 2^{n+1}$, and taking the maximum over all atomic intervals (equivalently, over all assignments $s$) yields $\mathrm{Lip}(F_\phi) = 2^{n+1}\cdot \mathrm{MAX\text{-}2SAT}(\phi)$. Thus, computing $\mathrm{Lip}(F_\phi)$ determines the optimal value of MAX-2SAT. Although $\mathrm{Lip}(F_\phi)$ is exponential in $n$, the encoding size of the network (number of neurons and bit-length of parameters) is polynomial in $n$ and $m$, ensuring a valid polynomial-time reduction. Since MAX-2SAT is APX-hard with inapproximability threshold $22/21$ \cite{haastad2001some}, computing the Lipschitz constant of a 1D ReLU network is also NP-hard and APX-hard within the same threshold. A reference implementation of the construction is available at \url{https://github.com/lixc21/MAX2-SAT-to-NN}.
\end{proof}

\begin{figure}[!t]
\centering
\begin{tikzpicture}[x=4cm, y=1.8cm]
  \draw[->] (-0.05,0) -- (1.12,0) node[right] {\scriptsize $x$};
  \draw[->] (0,-0.05) -- (0,1.15) node[above] {\scriptsize $T_n(x)$};
  
  \draw[thick, black] (0,0) -- (0.5,1) -- (1,0);
  
  \draw[thick, black, dashed] (0,0) -- (0.25,1) -- (0.5,0) -- (0.75,1) -- (1,0);
  
  \draw[thick, black, densely dotted] 
    (0,0) -- (0.125,1) -- (0.25,0) -- (0.375,1) -- (0.5,0) -- 
    (0.625,1) -- (0.75,0) -- (0.875,1) -- (1,0);
  
  \node[below] at (0,-0.03) {\scriptsize $0$};
  \node[below] at (0.125,-0.03) {\scriptsize $\frac{1}{8}$};
  \node[below] at (0.25,-0.03) {\scriptsize $\frac{2}{8}$};
  \node[below] at (0.375,-0.03) {\scriptsize $\frac{3}{8}$};
  \node[below] at (0.5,-0.03) {\scriptsize $\frac{4}{8}$};
  \node[below] at (0.625,-0.03) {\scriptsize $\frac{5}{8}$};
  \node[below] at (0.75,-0.03) {\scriptsize $\frac{6}{8}$};
  \node[below] at (0.875,-0.03) {\scriptsize $\frac{7}{8}$};
  \node[below] at (1,-0.03) {\scriptsize $1$};
  \foreach \x in {0, 0.125, 0.25, 0.375, 0.5, 0.625, 0.75, 0.875, 1} {
    \draw (\x,0) -- (\x,-0.02);
  }
  
  \node[left] at (-0.02,0) {\scriptsize $0$};
  \node[left] at (-0.02,1) {\scriptsize $1$};
  \draw (0,1) -- (-0.02,1);
  
  \draw[thick, black] (1.08,0.85) -- (1.18,0.85) node[right, font=\scriptsize] {$T_1$};
  \draw[thick, black, dashed] (1.08,0.55) -- (1.18,0.55) node[right, font=\scriptsize] {$T_2$};
  \draw[thick, black, densely dotted] (1.08,0.25) -- (1.18,0.25) node[right, font=\scriptsize] {$T_3$};
\end{tikzpicture}
\caption{Sawtooth functions $T_n(x)$ for $n=1,2,3$. The tent function $T_1$ has 2 linear pieces. The composition $T_2 = g \circ T_1$ has 4 pieces. The function $T_3$ has 8 pieces. In general, $T_n$ has $2^n$ pieces, partitioning $[0,1]$ into $2^n$ intervals that encode all $n$-bit binary strings.}
\label{fig:sawtooth}
\vspace{-0.4cm}
\end{figure}

\begin{figure}[!t]
\centering
\begin{tikzpicture}[x=0.9cm,y=0.55cm]
  \node[anchor=west] at (8.1,2.3) {\scriptsize $b_3$};
  \draw[->] (0,1.6) -- (8.3,1.6);
  \draw[->] (0,1.6) -- (0,2.8);
  \node[left] at (0,2.6) {\scriptsize $1$};
  \draw[thick, black] (0,1.6) -- (4,1.6); 
  \draw[thick, black] (4,1.6) -- (4.5,2.6) -- (5,1.6) -- (5.5,2.6) -- (6,1.6) -- (6.5,2.6) -- (7,1.6) -- (7.5,2.6) -- (8,1.6); 
  
  \node[anchor=west] at (8.1,0.7) {\scriptsize $b_2$};
  \draw[->] (0,0) -- (8.3,0);
  \draw[->] (0,0) -- (0,1.2);
  \node[left] at (0,1.0) {\scriptsize $1$};
  \draw[thick, black] (0,0) -- (2,0); 
  \draw[thick, black] (2,0) -- (2.5,1.0) -- (3,0) -- (3.5,1.0) -- (4,0) -- (4.5,1.0) -- (5,0) -- (5.5,1.0) -- (6,0); 
  \draw[thick, black] (6,0) -- (8,0); 
  
  \node[anchor=west] at (8.1,-0.9) {\scriptsize $b_1$};
  \draw[->] (0,-1.6) -- (8.3,-1.6) node[right] {\scriptsize $x$};
  \draw[->] (0,-1.6) -- (0,-0.4);
  \node[left] at (0,-0.6) {\scriptsize $1$};
  \draw[thick, black] (0,-1.6) -- (1,-1.6); 
  \draw[thick, black] (1,-1.6) -- (1.5,-0.6) -- (2,-1.6) -- (2.5,-0.6) -- (3,-1.6); 
  \draw[thick, black] (3,-1.6) -- (5,-1.6); 
  \draw[thick, black] (5,-1.6) -- (5.5,-0.6) -- (6,-1.6) -- (6.5,-0.6) -- (7,-1.6); 
  \draw[thick, black] (7,-1.6) -- (8,-1.6); 
  
  \node[below] at (0,-1.75) {\tiny $0$};
  \node[below] at (1,-1.75) {\tiny $\frac{1}{8}$};
  \node[below] at (2,-1.75) {\tiny $\frac{2}{8}$};
  \node[below] at (3,-1.75) {\tiny $\frac{3}{8}$};
  \node[below] at (4,-1.75) {\tiny $\frac{4}{8}$};
  \node[below] at (5,-1.75) {\tiny $\frac{5}{8}$};
  \node[below] at (6,-1.75) {\tiny $\frac{6}{8}$};
  \node[below] at (7,-1.75) {\tiny $\frac{7}{8}$};
  \node[below] at (8,-1.75) {\tiny $1$};
  \foreach \x in {0,1,2,3,4,5,6,7,8} {
    \draw (\x,-1.6) -- (\x,-1.68);
  }
\end{tikzpicture}
\caption{Bit extraction from the sawtooth function $T_3(x)$ for $n=3$ bits. The input domain $[0,1]$ is divided into $8$ intervals indexed by $j=0,\ldots,7$. Each row shows a bit signal $b_k = T_k(\mathrm{ReLU}(2 T_{n-k}(x) - 1))$, encoding the $k$-th bit of the reflected Gray code of the interval index $j$: $b_1$ (LSB) peaks on $j\in\{1,2,5,6\}$, $b_2$ on $j\in\{2,3,4,5\}$, and $b_3$ (MSB) on $j\in\{4,5,6,7\}$. Flat segments correspond to bit value 0; peaks correspond to bit value 1.}
\label{fig:bit_extraction}
\vspace{-0.4cm}
\end{figure}

Next, to transfer hardness results from ReLU networks to mpQPs, we show that any ReLU network can be represented as an mpQP with polynomial overhead. To the best of our knowledge, this construction is new and enables a direct transfer of the complexity results.

\begin{lemma}[mpQP Representation of ReLU Networks]\label{lem:mpqp_relu_equiv}
Let $F:\mathbb{R}^{n_x}\to\mathbb{R}^{d_{D+1}}$ be a $D$-layer ReLU network with $N$ hidden neurons and output $F(x)=W_{D+1}h_D(x)+b_{D+1}$, defined on $\|x\|_\infty\le B_0$. Then there exists a polynomial-time construction of an mpQP with $N+d_{D+1}$ decision variables, $2N$ inequality constraints and $d_{D+1}$ equality constraints, whose unique optimal solution decomposes as $z^*(x)=(z^*_1(x),\ldots,z^*_D(x),\,z^*_{D+1}(x))$ with $z^*_\ell\in\mathbb{R}^{d_\ell}$ and $z^*_{D+1}(x)=F(x)$.
\end{lemma}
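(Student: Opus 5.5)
The plan is to stack one projection-type QP per layer, coupled by the linear preactivation maps, so that the KKT system decouples layer by layer. Introduce variables $z_\ell\in\mathbb{R}^{d_\ell}$ for $\ell=1,\ldots,D$ (one per hidden neuron, $N=\sum_\ell d_\ell$) and an output block $z_{D+1}\in\mathbb{R}^{d_{D+1}}$. Write $z_0:=x$. For each layer impose the $2d_\ell$ inequality constraints
\[
z_\ell \ge 0,\qquad z_\ell \ge W_\ell z_{\ell-1}+b_\ell ,
\]
which gives $2N$ inequalities in total; each has the affine form $Az\le b+Sx$, because $z_{\ell-1}$ is either a decision block or the parameter $x$. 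Add the $d_{D+1}$ equalities $z_{D+1}=W_{D+1}z_D+b_{D+1}$. If the paper's format insists on inequalities only, each equality can be split into two.

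For the objective, I would use $\tfrac12\sum_{\ell=1}^{D}c_\ell\|z_\ell\|_2^2+\tfrac12\|z_{D+1}\|_2^2$ with positive weights $c_\ell$. Then $Q\succ0$, and the optimizer is unique. The goal is to show that this optimizer is exactly the forward pass $z_\ell^*=\mathrm{ReLU}(W_\ell z_{\ell-1}^*+b_\ell)$. Feasibility of the forward pass is immediate. The issue is that lowering $z_\ell$ relaxes the constraints of layer $\ell+1$, and this could trade off against the cost of deeper layers (and of $z_{D+1}$). Even the lower bound $z_\ell\ge\mathrm{ReLU}(W_\ell z_{\ell-1}+b_\ell)$ depends on the chosen $z_{\ell-1}$, so the forward pass is not automatically the componentwise minimum.

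I expect this coupling to be the main obstacle, and I would try two fixes. The first is to choose rapidly decreasing weights, $c_1\gg c_2\gg\cdots\gg c_D\gg 1$, with ratios polynomial in $\|W_\ell\|$ and $B_0$ so that they still have polynomial bit-length. One then argues through the KKT conditions, or by an exchange argument, that any deviation of an earlier layer from its minimal feasible value costs more than it can save downstream. This uses that the layer-$\ell$ constraints push $z_\ell$ up, while deeper costs can only profit from decreasing $z_\ell$ by at most $\|W_{\ell+1}\|$ times the deviation.

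The cleaner second fix is to restate the layer-$\ell$ block in lifted form, making the lower bound independent of how the earlier layers were chosen. Concretely, let each $z_\ell$ enter the later constraints only through its forward-determined value, by working with the KKT system and showing sequentially (by induction on $\ell$) that at the optimum every multiplier of a layer-$\ell$ constraint is determined by $c_\ell z_\ell$ minus the propagated multipliers from layer $\ell+1$. The weights keep these multipliers nonnegative exactly when $z_\ell$ equals the ReLU value.

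Once the forward pass is verified as the unique KKT point, the identities $z_\ell^*(x)=h_\ell(x)$ and $z_{D+1}^*=W_{D+1}h_D+b_{D+1}=F(x)$ follow. Two points remain to check. The first is that the feasible set is nonempty for every $\|x\|_\infty\le B_0$, which holds because the forward pass itself is feasible. The second is that all data are written down in time polynomial in the network size and the bit-lengths of $W_\ell,b_\ell,B_0$; this reduces to bounding the chosen $c_\ell$ via $\prod_\ell\|W_\ell\|_\infty$ and $B_0$.
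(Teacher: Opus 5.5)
\textbf{There is a genuine gap.} Your objective $\tfrac12\sum_\ell c_\ell\|z_\ell\|_2^2+\tfrac12\|z_{D+1}\|_2^2$ fails for every choice of positive weights $c_\ell$. The reason is that a pure quadratic penalty has zero marginal cost at $z_{\ell,j}=0$.

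The dangerous deviation is not lowering $z_\ell$, which is infeasible below the ReLU value. It is \emph{raising} it, so your sign intuition (``deeper costs can only profit from decreasing $z_\ell$'') is backwards. Through a negative entry of $W_{\ell+1}$, or through the output term, raising an inactive unit from $0$ to $\epsilon$ relaxes a downstream constraint and saves cost linear in $\epsilon$. Your penalty charges only $c_\ell\epsilon^2/2$ for this. In KKT terms, stationarity at an inactive unit reads $\alpha_{\ell,j}+\beta_{\ell,j}=c_\ell\cdot 0+(W_{\ell+1}^\top\beta_{\ell+1})_j$. The right-hand side is negative whenever the back-propagated multiplier is negative, independent of $c_\ell$. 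The same happens at active units with small $z_{\ell,j}>0$.

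A concrete counterexample is $D=1$ with $F(x)=1-\mathrm{ReLU}(x)$, i.e.\ $W_1=1$, $b_1=0$, $W_2=-1$, $b_2=1$. At $x=-1$ your QP minimizes $\tfrac{c_1}{2}z_1^2+\tfrac12(1-z_1)^2$ over $z_1\ge 0$. This gives $z_1^*=1/(1+c_1)\ne 0=h_1(-1)$ and $z_2^*=c_1/(1+c_1)\ne 1=F(-1)$ for every $c_1>0$.

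So neither fix can work. The exchange argument with rapidly decreasing $c_\ell$ fails, and the claim that ``the weights keep these multipliers nonnegative'' is false. The ``lifted'' variant is also not an mpQP, since forward-determined values are nonlinear in $x$.

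\textbf{The missing idea is a \emph{linear} term in the objective.} It must give every hidden unit a uniformly positive marginal cost, even at zero, that dominates any downstream benefit. The paper uses $\tfrac12\|z_\ell+M_\ell\mathbf 1\|^2$, so the gradient on the feasible set is $z_\ell+M_\ell\mathbf 1\ge M_\ell\mathbf 1$. It then chooses $M_\ell\ge \||W_{\ell+1}^\top|\bar\beta_{\ell+1}\|_\infty+1$, where the multiplier bounds $\bar\beta_\ell$ come from two passes. A forward pass computes activation bounds $U_\ell$. A backward recursion then starts from $\bar\beta_{D+1}=|W_{D+1}|U_D+|b_{D+1}|$, which accounts for the signed output multiplier $\mu=F(x)$.

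With this choice, backward induction on the KKT system gives $\beta_{\ell,j}\ge z_{\ell,j}+1>0$ on active units and $\alpha_{\ell,j}\ge 1$ on inactive ones. Hence the forward pass is a KKT point and, by strict convexity, the unique optimizer. The rest of your architecture matches the paper once this shift replaces the weights $c_\ell$: the variables, the $2N$ inequalities and $d_{D+1}$ equalities, the backward KKT induction, and the polynomial bit-length of the constants.
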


\begin{proof}
We construct an mpQP whose optimal solution reproduces the ReLU network output, and verify this by checking the KKT conditions. Define the mpQP with parameter $z_0 = x$ as:
\begin{equation}\label{eq:relu_mpqp_construction}
\begin{aligned}
    \min_{z_1, \ldots, z_D,\, z_{D+1}} \quad & \sum_{\ell=1}^{D} \frac{1}{2}\|z_\ell + M_\ell \mathbf{1}\|^2 + \frac{1}{2}\|z_{D+1}\|^2 \\
    \text{s.t.} \quad & z_\ell \ge 0, \quad \ell = 1, \ldots, D, \\
    & z_\ell \ge W_\ell z_{\ell-1} + b_\ell, \quad \ell = 1, \ldots, D, \\
    & z_{D+1} = W_{D+1} z_D + b_{D+1}.
\end{aligned}
\end{equation}
The last equality realises the network's affine output layer. The objective is strictly convex, so any KKT point is the unique global minimum. Introducing non-negative multipliers $\alpha_\ell, \beta_\ell \ge 0$ for the hidden-layer constraints $z_\ell \ge 0$ and $z_\ell \ge W_\ell z_{\ell-1} + b_\ell$ ($\ell=1,\ldots,D$), and an unconstrained multiplier $\mu\in\mathbb{R}^{d_{D+1}}$ for the output-layer equality, the Lagrangian is
\[
\begin{aligned}
\mathcal{L} = {}& \sum_{\ell=1}^{D} \Bigl[ \tfrac{1}{2}\|z_\ell + M_\ell \mathbf{1}\|^2 - \alpha_\ell^\top z_\ell - \beta_\ell^\top (z_\ell - W_\ell z_{\ell-1} - b_\ell) \Bigr] \\
& + \tfrac{1}{2}\|z_{D+1}\|^2 - \mu^\top(z_{D+1} - W_{D+1} z_D - b_{D+1}).
\end{aligned}
\]
The KKT conditions require primal feasibility, dual feasibility ($\alpha_\ell,\beta_\ell\ge 0$; $\mu$ free), complementary slackness ($\alpha_{\ell,j} z_{\ell,j} = 0$ and $\beta_{\ell,j} (z_{\ell,j} - (W_\ell z_{\ell-1} + b_\ell)_j) = 0$), and stationarity. Setting $\beta_{D+1}:=\mu$, differentiation with respect to $z_\ell$ for $\ell=1,\ldots,D$ gives the uniform identity
\begin{equation}\label{eq:stationarity_kkt}
z_\ell + M_\ell \mathbf{1} - \alpha_\ell - \beta_\ell + W_{\ell+1}^\top \beta_{\ell+1} = 0,
\end{equation}
while differentiation with respect to $z_{D+1}$ gives $z_{D+1}-\mu=0$, i.e., $\beta_{D+1}=\mu=z_{D+1}^*$ (which is signed, in contrast to $\beta_\ell\ge 0$ for $\ell\le D$).

To ensure valid non-negative multipliers exist, we determine the constants $M_\ell$ via a two-pass procedure. First, compute activation bounds $U_\ell$ by forward propagation: set $U_0 = B_0 \mathbf{1}$ where $\|x\|_\infty \le B_0$, and recursively define $U_\ell = \mathrm{ReLU}(|W_\ell| U_{\ell-1} + |b_\ell|)$. These bounds satisfy $0 \le h_\ell \le U_\ell$ component-wise for all valid inputs. Second, compute $M_\ell$ and auxiliary dual bounds $\bar{\beta}_\ell$ by backward recursion from layer $D+1$ to layer $1$:
\begin{align}
    \bar{\beta}_{D+1} &= |W_{D+1}|\,U_D + |b_{D+1}|, \nonumber \\
    M_\ell &= \max\!\left(1,\, \bigl\| |W_{\ell+1}^\top|\, \bar{\beta}_{\ell+1} \bigr\|_\infty + 1 \right), \label{eq:M_def} \\
    \bar{\beta}_\ell &= U_\ell + M_\ell \mathbf{1} + |W_{\ell+1}^\top|\, \bar{\beta}_{\ell+1}, \nonumber
\end{align}
where $|A|$ denotes the element-wise absolute value of a matrix. The initialisation $\bar\beta_{D+1}$ is a componentwise upper bound on $|F(x)|=|z_{D+1}^*|$ obtained by one forward pass.

We now verify that the ReLU network output $z_\ell^* = h_\ell(x)$ satisfies the KKT conditions. Primal feasibility holds by definition of the ReLU function, and the output-layer equality is automatically satisfied by $z_{D+1}^*=W_{D+1}h_D(x)+b_{D+1}=F(x)$. For the dual variables, \eqref{eq:stationarity_kkt} gives $\alpha_\ell + \beta_\ell = z_\ell + M_\ell \mathbf{1} + W_{\ell+1}^\top \beta_{\ell+1}$. We show by backward induction on $\ell=D,D-1,\ldots,1$ that valid multipliers exist with $|\beta_\ell|\le \bar\beta_\ell$ component-wise (and $\beta_\ell\ge 0$ for $\ell\le D$).

For the base case $\ell = D+1$, the output stationarity forces $\beta_{D+1}=z_{D+1}^*=F(x)$, and $|\beta_{D+1}|\le |W_{D+1}|U_D+|b_{D+1}|=\bar\beta_{D+1}$ by the forward bound.

For $\ell\le D$, assume $|\beta_{\ell+1}|\le \bar{\beta}_{\ell+1}$ component-wise. Consider component $j$. If $(W_\ell h_{\ell-1} + b_\ell)_j > 0$, then $z_{\ell,j} = (W_\ell h_{\ell-1} + b_\ell)_j > 0$, and complementarity requires $\alpha_{\ell,j} = 0$. Stationarity then gives $\beta_{\ell,j} = z_{\ell,j} + M_\ell + (W_{\ell+1}^\top \beta_{\ell+1})_j$. Using $|(W_{\ell+1}^\top\beta_{\ell+1})_j|\le (|W_{\ell+1}^\top|\bar\beta_{\ell+1})_j\le \||W_{\ell+1}^\top|\bar\beta_{\ell+1}\|_\infty$ together with $M_\ell \ge \||W_{\ell+1}^\top|\bar\beta_{\ell+1}\|_\infty + 1$ from \eqref{eq:M_def}, we obtain $\beta_{\ell,j}\ge z_{\ell,j}+1 > 0$ and $\beta_{\ell,j} \le U_{\ell,j} + M_\ell + (|W_{\ell+1}^\top|\bar\beta_{\ell+1})_j = \bar\beta_{\ell,j}$. Conversely, if $(W_\ell h_{\ell-1} + b_\ell)_j \le 0$, then $z_{\ell,j} = 0$, complementarity requires $\beta_{\ell,j} = 0$ (the second constraint is strictly slack when the inequality is strict, and we choose $\beta_{\ell,j}=0$ at equality), and stationarity gives $\alpha_{\ell,j} = M_\ell + (W_{\ell+1}^\top \beta_{\ell+1})_j \ge 1 > 0$ by the same bound on $M_\ell$.

Thus the specific choice of $M_\ell$ guarantees stationarity with non-negative multipliers respecting complementarity at every hidden layer. Since the mpQP is strictly convex, these KKT conditions are necessary and sufficient for global optimality, proving $z_\ell^* = h_\ell(x)$ for $\ell=1,\ldots,D$ and $z_{D+1}^* = W_{D+1} h_D(x) + b_{D+1} = F(x)$, so the network output is exactly recovered from the mpQP solution. Finally, the construction parameters $\{M_\ell\}$ are derived via a fixed number of forward and backward passes; their values may grow exponentially with depth, but their binary representations have length polynomial in the size of the network parameters. A reference implementation of the construction is available at \url{https://github.com/lixc21/NN-to-mpQP}.
\end{proof}

The two reductions can now be composed. Lemma~\ref{lem:1d_nn_hard} reduces MAX-2SAT to 1D-NN-Lipschitz, and Lemma~\ref{lem:mpqp_relu_equiv} embeds any ReLU network into an mpQP whose output-layer subvector reproduces $F(x)$. Their composition transfers the hardness to the scalar-parameter mpQP. One subtlety remains: the mpQP Lipschitz constant depends on the full solution vector $z^*$, not only on its output component. The next theorem closes this gap via an output-layer scaling argument, and yields the desired NP- and APX-hardness.

\begin{theorem}\label{thm:scalar_hard}
Computing the Lipschitz constant $L$ of an mpQP is NP-hard and APX-hard even when $n_x = 1$. 
\end{theorem}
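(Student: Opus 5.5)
We compose Lemma~\ref{lem:1d_nn_hard} with Lemma~\ref{lem:mpqp_relu_equiv}. The only real difficulty is that the mpQP Lipschitz constant $L$ is taken over the full solution vector $z^*(x)=(z_1^*,\ldots,z_D^*,z_{D+1}^*)$, whereas only $z_{D+1}^*=F_\phi(x)$ carries the MAX-2SAT signal. We therefore scale the output layer so that $z_{D+1}^*$ dominates the Jacobian of every critical region.

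\emph{Construction.} Given a MAX-2SAT instance $\phi$, build the scalar-input network $F_\phi$ of Lemma~\ref{lem:1d_nn_hard}, with depth $O(n)$, width $O(n+m)$, and $\mathrm{Lip}(F_\phi)=2^{n+1}\mathrm{MAX\text{-}2SAT}(\phi)$. Replace its output weights by $W_{D+1}\gets\kappa W_{D+1}$ and $b_{D+1}\gets\kappa b_{D+1}$, where $\kappa=2^{p}$ for a polynomially bounded integer $p$ chosen below. The scaled network is $\kappa F_\phi$. Apply Lemma~\ref{lem:mpqp_relu_equiv} with $B_0=1$ (domain $[0,1]$) to obtain an mpQP with $n_x=1$ and $z^*(x)=(h_1(x),\ldots,h_D(x),\kappa F_\phi(x))$. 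The recursion for $M_\ell$ only adds $O(p)$ bits, so the whole construction has polynomial size.

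\emph{Sandwich bound.} On each critical region the Jacobian is the column vector $(h_1',\ldots,h_D',\kappa F_\phi')$. Hence
\[
\kappa\,\mathrm{Lip}(F_\phi)\;\le\; L\;\le\;\sqrt{\kappa^2\mathrm{Lip}(F_\phi)^2+\textstyle\sum_{\ell\le D}\mathrm{Lip}(h_\ell)^2}.
\]
The lower bound holds because every affine piece of $F_\phi$ lies in some critical region; equivalently, $\|K_i\|_2\ge|\kappa F_\phi'|$. For the upper bound, all weights lie in $\{0,\pm1,\pm2\}$ and the width is $W=O(n+m)$, so each hidden neuron has Lipschitz constant at most $(2W)^{D}$. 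This gives $\sum_\ell\mathrm{Lip}(h_\ell)^2\le N(2W)^{2D}=:\Gamma^2$, where $\log\Gamma=O(n\log(n+m))$. Take $p$ such that $\kappa\ge\Gamma/\eta$ for a small fixed $\eta>0$. Since $\mathrm{Lip}(F_\phi)\ge2^{n+1}\ge1$ whenever some clause is satisfiable (which always holds for nonempty $\phi$), we obtain
\[
\kappa\,\mathrm{Lip}(F_\phi)\le L\le(1+\eta)\,\kappa\,\mathrm{Lip}(F_\phi).
\]
Both $p$ and the bit-length of $\kappa$ are polynomial.

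\emph{Gap transfer.} Suppose a $\rho$-approximation returns $\widehat L$ with $\widehat L\le L\le\rho\widehat L$. Then $\widehat L/(\kappa2^{n+1})$ approximates $\mathrm{MAX\text{-}2SAT}(\phi)$ within ratio $\rho(1+\eta)$. For any fixed $\rho<22/21$, choose $\eta$ with $\rho(1+\eta)<22/21$; the resulting algorithm would contradict the inapproximability bound of \cite{haastad2001some} used in Lemma~\ref{lem:1d_nn_hard}. Hence Problem~\ref{prob:lip_approx} with $n_x=1$ is NP-hard for every $\rho\in(1,22/21)$, which establishes APX-hardness and therefore NP-hardness.

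The main obstacle is the upper bound on the hidden-layer derivatives. It must be crude enough to be provable, yet have only polynomially many bits so that $\kappa$ remains polynomial-size. The bound $(2W)^D$ is what we would try first. If the Lemma's strict-convexity requirement needs $Q\succ0$ with the equality constraint eliminated, substituting $z_{D+1}$ handles it, since this substitution leaves the solution map unchanged.
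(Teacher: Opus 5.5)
Your proposal is correct and follows the paper's proof essentially step for step: scale the output layer, sandwich $L$ between $\kappa\,\mathrm{Lip}(F_\phi)$ and $(1+\eta)\,\kappa\,\mathrm{Lip}(F_\phi)$, and transfer H\aa{}stad's $22/21$ gap by choosing $\rho(1+\eta)<22/21$; your explicit bound $\Gamma=\sqrt{N}\,(2W)^{D}$ simply makes concrete the paper's $P_{\mathrm{rest}}=O(2^{\mathrm{poly}(n)})$. Your closing fallback, however, is unnecessary and wrong as stated: the objective of Lemma~\ref{lem:mpqp_relu_equiv} already has Hessian $I\succ 0$ (the output equality is just two inequalities), and eliminating $z_{D+1}$ would remove the component $\kappa F_\phi$ from $z^*$, which breaks the lower bound $\kappa\,\mathrm{Lip}(F_\phi)\le L$ that your argument depends on.
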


\begin{proof}
Apply Lemma~\ref{lem:mpqp_relu_equiv} to the ReLU network $F$ from Lemma~\ref{lem:1d_nn_hard}. The resulting mpQP has solution vector $z^*(x)=[z_1^*(x)^\top,\ldots,z_{D+1}^*(x)^\top]^\top$ with $z_{D+1}^*(x)=F(x)$, and its Lipschitz constant is the Euclidean norm of the full Jacobian, not of the output component alone. Let $P_{\mathrm{rest}}$ denote the Lipschitz constant of the hidden components $z_{1:D}^*$; the bit-bounded weights of Lemma~\ref{lem:1d_nn_hard} give $P_{\mathrm{rest}}=O(2^{\mathrm{poly}(n)})$.

\emph{Step 1: Output-layer scaling.} Fix a parameter $\delta\in(0,1)$ to be chosen below. We scale the output layer by a factor $K$, defining $\tilde{F}(x) = K F(x)$ (equivalently, replacing $W_{D+1}, b_{D+1}$ by $K W_{D+1}, K b_{D+1}$ in Lemma~\ref{lem:mpqp_relu_equiv}). This transforms the output component to $\tilde{z}_{D+1}^*(x) = K z_{D+1}^*(x)$, while leaving $z_{1:D}^*$ unaffected. The Lipschitz constant of the modified solution vector $\tilde{z}^*$ then satisfies
\begin{equation}\label{eq:scalar_sandwich_raw}
    K \cdot \mathrm{Lip}(F) \;\le\; \mathrm{Lip}(\tilde{z}^*) \;\le\; \sqrt{P_{\mathrm{rest}}^2 + K^2 \cdot \mathrm{Lip}(F)^2}.
\end{equation}

\emph{Step 2: Multiplicative sandwich.} Let $n$ and $m$ denote the variable and clause counts of the underlying MAX-2SAT instance. Any non-trivial instance has $\mathrm{MAX\text{-}2SAT}(\phi)\ge 1$, so Lemma~\ref{lem:1d_nn_hard} gives $\mathrm{Lip}(F)=2^{n+1}\cdot\mathrm{MAX\text{-}2SAT}(\phi)\ge 2^{n+1}$. Choose $K=\bigl\lceil P_{\mathrm{rest}}/(2^{n+1}\delta)\bigr\rceil$, so that $P_{\mathrm{rest}}/(K\cdot\mathrm{Lip}(F))\le \delta$. Dividing \eqref{eq:scalar_sandwich_raw} by $K\cdot\mathrm{Lip}(F)$ yields the sandwich
\begin{equation}\label{eq:scalar_sandwich}
    1 \;\le\; \frac{\mathrm{Lip}(\tilde{z}^*)}{K\cdot\mathrm{Lip}(F)} \;\le\; \sqrt{1+\delta^2} \;\le\; 1+\delta.
\end{equation}
Since $\log K = \log P_{\mathrm{rest}} + (n+1) + \log(1/\delta) = \mathrm{poly}(n)$ for any $\delta\ge 2^{-\mathrm{poly}(n)}$, the binary size of $K$ remains polynomial, and so does the entire mpQP description (the $M_\ell$ in Lemma~\ref{lem:mpqp_relu_equiv} scale at most by a factor $\mathrm{poly}(K)$, contributing only $\mathrm{poly}(n)$ extra bits).

\emph{Step 3: Gap-preserving reduction.} Suppose, for contradiction, that for some fixed $\rho<22/21$ a polynomial-time algorithm outputs a rational $\widehat{L}$ satisfying $\widehat{L}\le \mathrm{Lip}(\tilde{z}^*)\le \rho\widehat{L}$ in the sense of Problem~\ref{prob:lip_approx} (the threshold $22/21$ is H\aa{}stad's MAX-2SAT inapproximability factor \cite{haastad2001some} and differs from $\sqrt{17/16}$ in Theorem~\ref{thm:complexity_main} because the latter reduction operates on $L^2$). Set $\widehat{L}_F := \widehat{L}/\bigl(K(1+\delta)\bigr)$. Combining $\widehat L\le \mathrm{Lip}(\tilde z^*)$ with the upper inequality of \eqref{eq:scalar_sandwich} yields $\widehat L_F\le \mathrm{Lip}(F)$, while combining $\mathrm{Lip}(\tilde z^*)\le \rho\widehat L$ with the lower inequality of \eqref{eq:scalar_sandwich} yields $\mathrm{Lip}(F)\le \rho\widehat L/K = \rho(1+\delta)\,\widehat L_F$. Hence
\begin{equation}\label{eq:scalar_gap_clean}
    \widehat{L}_F \;\le\; \mathrm{Lip}(F) \;\le\; \rho(1+\delta)\,\widehat{L}_F,
\end{equation}
so $\widehat L_F$ is a $\rho(1+\delta)$-approximation of $\mathrm{Lip}(F)$ in the single-sided sense of Problem~\ref{prob:lip_approx}. Choosing $\delta=\tfrac12\bigl((22/21)/\rho-1\bigr)>0$ (a positive rational constant independent of $n,m$, hence of polynomial bit-length) gives $\rho(1+\delta)=(\rho+22/21)/2<22/21$, which by Lemma~\ref{lem:1d_nn_hard} is impossible unless P$=$NP. This establishes APX-hardness of Problem~\ref{prob:lip_approx} and NP-hardness of exactly computing $L$ with $n_x=1$.
\end{proof}

\section{Numerical Experiments}\label{sec:numerical}

We present numerical experiments that validate the theoretical analysis. All computations use \textsc{pdaqp}~\cite{arnstrom2022unifying}, a parametric dual active-set solver that enumerates all critical regions, on a 128-core server (AMD EPYC 7742, 251~GB RAM) with a per-instance timeout of 300~s.

For Experiments~1 and~2, we generate random mpQP instances of the form~\eqref{eq:mpqp_def} as follows: $Q = M^\top M + I_{n_z}$ with $M \sim \mathcal{N}(0,1)^{n_z\times n_z}$ (ensuring $Q \succ 0$), $H \sim \mathcal{N}(0,1)^{n_x \times n_z}$, $f = 0$, $A \sim \mathcal{N}(0,1)^{n_c \times n_z}$, $S \sim \mathcal{N}(0,1)^{n_c \times n_x}$, and $b = 2|\xi|+2$ with $\xi \sim \mathcal{N}(0,1)^{n_c}$ (ensuring feasibility near the origin). The parameter domain is $\mathcal{X}=[-5,5]^{n_x}$. For each configuration of $(n_x, n_z, n_c)$, we use five independent random seeds and report the mean number of CRs. Experiment~3 instead uses the deterministic tent-map ReLU network from Lemma~\ref{lem:1d_nn_hard}; its setup is detailed in Section~\ref{subsec:exp3}.

\begin{figure}[!t]
\centering
\includegraphics[width=0.8\linewidth]{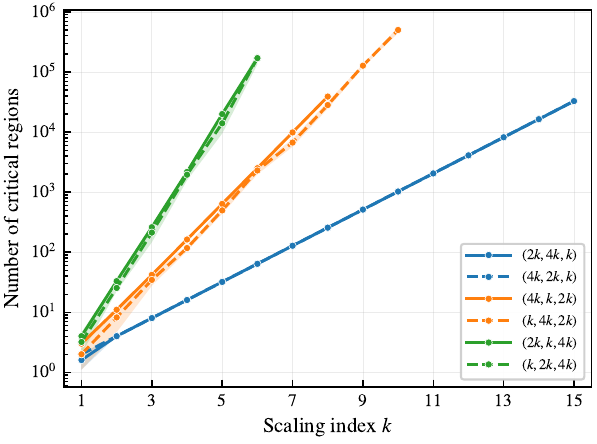}
\caption{Exp.~1: joint scaling $(n_x,n_z,n_c)=(a_x k,a_z k,a_c k)$. Number of CRs grows exponentially in $k$.}
\label{fig:exp1}
\vspace{-0.3cm}
\end{figure}

\begin{figure}[!t]
\centering
\includegraphics[width=0.8\linewidth]{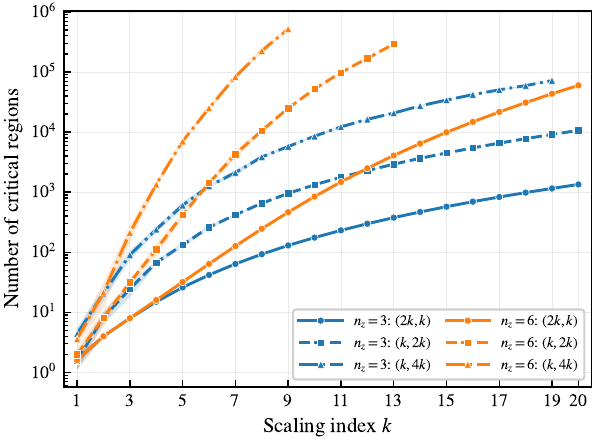}
\caption{Exp.~2A: fixed $n_z$. Polynomial growth as $O(n_c^{n_z})$.}
\label{fig:exp2a}
\vspace{-0.5cm}
\end{figure}

\begin{figure}[!t]
\centering
\includegraphics[width=0.8\linewidth]{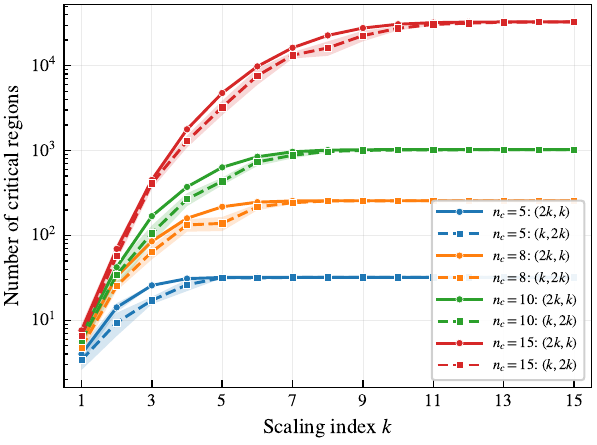}
\caption{Exp.~2B: fixed $n_c$. Curves saturate at $\le 2^{n_c}$.}
\label{fig:exp2b}
\vspace{-0.3cm}
\end{figure}

\begin{figure}[!t]
\centering
\includegraphics[width=0.8\linewidth]{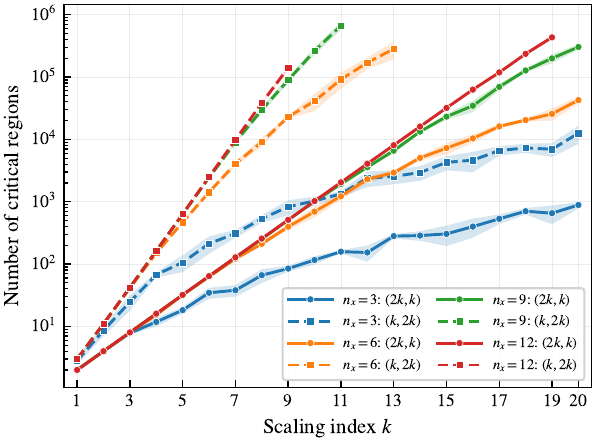}
\caption{Exp.~2C: fixed $n_x$. Exponential growth persists even with $n_x$ fixed.}
\label{fig:exp2c}
\vspace{-0.5cm}
\end{figure}

\subsection{Experiment 1: Joint Scaling}\label{subsec:exp1}

We let all three dimensions grow simultaneously as $(n_x,n_z,n_c)=(a_x k,\,a_z k,\,a_c k)$ and sweep $k=1,2,\ldots$, for six triples $(a_x,a_z,a_c)$ with pairwise distinct elements from $\{1,2,4\}$. Figure~\ref{fig:exp1} shows that all configurations exhibit exponential growth of the number of CRs, with the growth rate governed primarily by $n_z$ and $n_c$. Configurations with $a_c=4$ (green) exceed the timeout around $k=6$, while those with $a_c=1$ (blue) remain tractable up to $k=15$. The narrow shaded bands (mean $\pm$ one standard deviation) indicate that this exponential growth is consistent across random seeds.

\subsection{Experiment 2: Fixing One Dimension}\label{subsec:exp2}

To isolate the role of each dimension, we conduct three sub-experiments that fix one of $n_z$, $n_c$, or $n_x$ and let the remaining two grow with $k$.

\paragraph{Experiment 2A: Fixed $n_z$}
We fix $n_z \in \{3, 6\}$ and let $(n_x, n_c)$ grow with $k$. Figure~\ref{fig:exp2a} shows that the number of CRs still grows rapidly --- exceeding $10^5$ for the largest configurations --- even though $n_z$ is fixed. This is consistent with Lemma~\ref{lem:fixed_nc_nz_new}: while fixed $n_z$ yields polynomial-time solvability, the number of CRs can still grow as $O(n_c^{n_z})$.

\paragraph{Experiment 2B: Fixed $n_c$}
We fix $n_c \in \{5, 8, 10, 15\}$ and let $(n_x, n_z)$ grow with $k$. Figure~\ref{fig:exp2b} shows that the number of CRs \emph{saturates} to a plateau as $k$ increases: e.g., for $n_c=5$ the count stabilizes near 30, while for $n_c=15$ it levels off near $3\times10^4$. This is a direct consequence of Lemma~\ref{lem:fixed_nc_nz_new}: with $n_c$ fixed, there are at most $2^{n_c}$ possible active sets, so the number of CRs is bounded regardless of $n_z$ and $n_x$.

\paragraph{Experiment 2C: Fixed $n_x$}
We fix $n_x \in \{3, 6, 9, 12\}$ and let $(n_z, n_c)$ grow with $k$. Figure~\ref{fig:exp2c} shows that fixing $n_x$ \emph{does not} prevent exponential growth of the number of CRs --- all curves continue to climb steeply, consistent with Theorem~\ref{thm:scalar_hard}: the parameter dimension $n_x$ is not the source of complexity.

\subsection{Experiment 3: Tent-Map ReLU$\,\to\,$mpQP}\label{subsec:exp3}

We construct a $D$-layer tent-map ReLU network ($n_x=1$, width~2) as in Lemma~\ref{lem:1d_nn_hard}, for $D\in\{1,\ldots,8\}$, and convert it to an mpQP via Lemma~\ref{lem:mpqp_relu_equiv} on the parameter domain $\mathcal{X}=[-1,1]$. Since the scalar linear output layer is absorbed into the objective, the output variable $z_{D+1}$ and its equality constraint are eliminated, leaving only the hidden activations as decision variables; with $N=2D$ this gives $n_z=2D$ (one fewer than the worst-case $N+d_{D+1}$ in Lemma~\ref{lem:mpqp_relu_equiv}) and $n_c=4D$ (matching the $2N$ inequalities in Lemma~\ref{lem:mpqp_relu_equiv}). Table~\ref{tab:exp3} reports the number of CRs enumerated by the solver alongside the theoretical count $2^D+1$ (the tent-map network partitions $[0,1]$ into $2^D$ linear pieces, plus one exterior region on $[-1,0]$ where the network is identically zero). For $D\le 6$, the two values match exactly. For $D=7$ and $D=8$, the actual counts (134 and 266) slightly exceed $2^D+1$ (129 and 257), because the mpQP construction may introduce additional CRs that do not correspond to distinct ReLU activation patterns. The $\Theta(2^D)$ scaling is clearly maintained even at $n_x=1$, consistent with Theorem~\ref{thm:scalar_hard}: the parameter dimension is not the source of complexity.

\begin{table}[!t]
\centering
\caption{Tent-map ReLU network ($n_x=1$, width~2) converted to mpQP.}\label{tab:exp3}
\begin{tabular}{ccccc}
\toprule
Depth $D$ & $n_z$ & $n_c$ & Theoretical CRs & Actual CRs \\
\midrule
1 & 2 & 4 & 3 & 3 \\
2 & 4 & 8 & 5 & 5 \\
3 & 6 & 12 & 9 & 9 \\
4 & 8 & 16 & 17 & 17 \\
5 & 10 & 20 & 33 & 33 \\
6 & 12 & 24 & 65 & 65 \\
7 & 14 & 28 & 129 & 134 \\
8 & 16 & 32 & 257 & 266 \\
\bottomrule
\end{tabular}
\vspace{-0.5cm}
\end{table}

\section{Conclusion}
This paper gives a complete complexity characterization for computing the Lipschitz constant of mpQP solution maps: the problem is NP-hard and APX-hard in general and remains so even when the parameter is scalar, while fixing either the number of constraints or the number of decision variables restores polynomial-time solvability. The scalar-parameter hardness relies on a new constructive equivalence recovering any ReLU network as the unique optimal solution of an mpQP, which is of independent interest for transferring complexity and verification results between the two models. Numerical experiments confirm that the joint growth of the constraint and decision-variable dimensions is the source of intractability.

\bibliographystyle{IEEEtran}
\bibliography{mybib.bib}

\end{document}